\newcommand{\GG}{\mathcal{G}}
\newcommand{\A}{\mathcal{A}}
\newcommand{\unif}{\mathcal{U}}
\newcommand{\tok}{\textit{Tok}}
\newcommand{\shuff}{\textit{Shuff}}
\newcommand{\defeq}{\vcentcolon=}
\newcommand{\iud}{IUD}
\newcommand{\compequiv}{\stackrel{c}{\equiv}}
\title{Robust ambiguity for contact tracing}
\author{David Mestel\footnote{This work was supported by FNR under project SmartExit.}\\University of Luxembourg\\\nolinkurl{david.mestel@uni.lu}}
\date{}
\begin{document}

\newtheorem{theorem}{Theorem}
\maketitle

\begin{abstract}
    A known drawback of `decentralised' contact tracing architectures is that users who have been in contact with an infected person are able to precisely identify the relevant contact, and thereby perhaps identify the infected person.  In \cite{chan2020pact}, the PACT team discuss a simple DH-based protocol to mitigate this problem, but dismiss it because it is vulnerable to a malicious user who may deviate from the specified behaviour.  This note presents a modified protocol which achieves robustness against a fully malicious user, and establishes some simple security properties.
\end{abstract}

\section{Introduction}

In the design of contact tracing apps, the choice between `centralised' and `decentralised' architectures has received great public attention.  The latter, which has been adopted by the majority of countries, by the Google-Apple API and by the DP3T consortium~\cite{troncoso2020decentralized}, has many privacy advantages, but one disadvantage is that each individual user is able to determine which of the many tokens they have collected came from an infected user, and consequently (by recalling the precise time and strength of the contact), may be able to identify the infected individual among their contacts~\cite{vaudenay2020centralized}.  This may compromise the privacy of the infected person, and violates the principle of manual contact tracing that a person should be told only that they have been in contact with an infected person, and not the person's identity~\cite{cdc2020contact}.

One possible solution to this problem, discussed in \cite{chan2020pact}, is for infected users to send to the system the tokens they have collected rather than those they have distributed.  These can then be rerandomised before being broadcast to all users, so that users are able to recognise a rerandomised token as being derived from one they have broadcast, but not specifically which one.  The simple protocol described in \cite{chan2020pact} is as follows (working in a multiplicative group $\GG$ of prime order with generator $g$):
\begin{enumerate}
    \item Each user $u$ generates a keypair $(s_u,g^{s_u})$ with random $s_u$.
    \item Each broadcast token is of the form $t_i=(g^{\alpha_i},g^{\alpha_is_u})$, where each $\alpha_i$ is chosen uniformly at random.
    \item On testing positive, a user who has received the token $t=(x,y)$ uploads $(x^\beta,y^\beta)$ for fresh random $\beta$.
    \item To determine whether they are at risk, user $u$ checks whether a token of the form $(x,y)$ with $y=x^{s_u}$ is present on the server.
\end{enumerate}

Conditional on the Decisional Diffie-Hellman (DDH) assumption on $\GG$, the tokens generated by each user are pseudorandom, and furthermore a user who learns they are at risk cannot tell which of their tokens (generated using $s_u$) was reported, because of the exponentiation by fresh random $r$.  However, as the authors note, this protocol is fatally flawed in the presence of malicious users.  Such a user can simply use a fresh $s_u$ for every token they generate, and this cannot be detected or prevented (one could imagine requiring every token to be accompanied by a zero-knowledge proof that it was generated using one of the keys in a public list of public keys, but this would not be remotely practical).  

In this note we show how to modify the protocol so as to be robust against fully malicious users, at the cost of requiring the server to send a `personalised' (but non-secret) set of rerandomised tokens to each user.  The essential idea is to extend the rerandomisation step such that the messages corresponding to malformed tokens are flat random. 

\section{Protocol description}

We describe the protocol in three phases: registration; broadcasting, where a user has contacts with others and transmits tokens; and infection, after the user has tested positive.  Throughout, $\GG$ is assumed to be a multiplicative abelian group of prime order $p$, with generator $g$.

\begin{enumerate}[-]
\item\textbf{Registration phase:} sample $s\leftarrow \unif([0,p-1])$ and send $g^s$ (non-anonymously) to the server, which adds it to the list of public keys.

Since registration is not required to be anonymous, the server can ensure that each individual is only able to register a single key.

\item\textbf{Broadcast phase:} sample $\alpha \leftarrow \unif([1,p-1])$, and broadcast the token \[\tok(s,\alpha) \defeq (g^\alpha,g^{s\alpha}),\]
replacing $\alpha$ with a fresh random value after a suitable period.

\item\textbf{Infection phase:} a user who tests positive sends the server the list of tokens it has received.  The server verifies that each token $t=(x,y)$ has $x\neq g$ (discarding those that fail).  At the end of each day, for each user $u$, say with public key $g^{s_u}$, for each token $t=(x,y)$ in its list the server samples $\beta,\gamma \leftarrow \unif([0,p-1])$ and sends to $u$
\[\shuff((x,y),\beta,\gamma,g^{s_u}) \defeq (x^\beta g^\gamma,y^\beta (g^{s_u})^\gamma).\]
On receiving $t'=(z,w)$, user $u$ checks whether $w=z^{s_u}$, and if so knows that they are infected.
\end{enumerate}

Note that ambiguity (but not other privacy properties) is dependent on the honesty of the server; similarly, in manual contact tracing ambiguity is dependent on the discretion of the tracer.

To establish correctness, observe that if $t'$ came from a token broadcast by $u$, then we have
\begin{align*}
t'&=\shuff\left(\tok(s,\alpha),\beta,\gamma,g^{s_u}\right)\\
&=\left((g^\alpha)^\beta g^\gamma,(g^{s_u\alpha})^\beta (g^{s_u})^\gamma\right)\\
&=\left(g^{\alpha\beta+\gamma},g^{s_u(\alpha\beta+\gamma)}\right),
\end{align*}
as required (for some $\alpha,\beta,\gamma$).

The communication cost of this protocol is equivalent to that of just sending each user a list of all the tokens from infected users, with $O(1)$ computational cost per message passed from server to user (a single exponentiation by the user, and four exponentiations by the server).

\section{Security properties}

In this section we establish three key security properties of the protocol.  First, conditional on the DDH assumption on $\GG$, the tokens broadcast by a user with randomly chosen key are computationally indistinguishable from independent random group elements, even with knowledge of the public key (Theorem \ref{thm:indist}), and so no privacy is lost by uninfected users.  Second, for each user $u$ the output of $\shuff$ (as a probability distribution on $\GG\times\GG$ with random $\beta,\gamma$) is equal on all tokens honestly generated by $u$ (Theorem \ref{thm:honest}), and so an honest-but-curious $u$ will be unable to determine which of their broadcast tokens corresponded to contact with an infected person.  Third, for any $u$ the output of $\shuff$ on any input other than a token honestly generated by $u$ is uniformly random (Theorem \ref{thm:dishonest}), and so a malicious user is not able to defeat ambiguity by broadcasting malformed tokens.

\begin{theorem}\label{thm:indist}
Let $k$ be a positive integer, and $(S, A_1,\ldots,A_k)\sim \unif([0,p-1]\times[0,p-1]^k)$.  If $\GG$ satisfies the DDH assumption then \[(g^S,\tok(S,A_1),\ldots,\tok(S,A_k)) \compequiv \unif(\GG\times (\GG\times \GG)^k).\]
\end{theorem}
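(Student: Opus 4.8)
The plan is to prove the statement by a standard hybrid argument over the $k$ tokens, reducing the multi-instance claim to the single-instance DDH assumption. For $0 \le j \le k$ I would define the hybrid distribution $H_j$ in which the public key $g^S$ is included, the first $j$ tokens are replaced by independent uniform elements of $\GG \times \GG$, and the remaining $k-j$ tokens are genuine tokens $\tok(S, A_i)$ generated with the common secret $S$ and fresh uniform exponents $A_i \sim \unif([0,p-1])$. By construction $H_0$ is exactly the left-hand distribution, while $H_k$ is $\unif(\GG \times (\GG \times \GG)^k)$ (note $g^S$ is itself uniform since $S$ is and $s \mapsto g^s$ is a bijection). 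By the triangle inequality it then suffices to show $H_{j-1} \compequiv H_j$ for each $j \in \{1,\ldots,k\}$, as the total distinguishing advantage is at most $k$ times the resulting DDH advantage, hence negligible for polynomially bounded $k$.

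For the step from $H_{j-1}$ to $H_j$, I would build a DDH distinguisher $\A'$ from any distinguisher $\A$ separating the two hybrids. On input a challenge $(g^a, g^b, g^c)$, $\A'$ sets the public key to $g^a$, samples the first $j-1$ components as uniform pairs, sets the $j$-th token to $(g^b, g^c)$, generates each of the last $k-j$ tokens as $(g^{\alpha_i}, (g^a)^{\alpha_i})$ for freshly sampled $\alpha_i$, and returns $\A$'s verdict on the resulting tuple. When $c = ab$ the $j$-th token equals $\tok(a,b)$, a genuine token with secret $a$ and uniform exponent $b$, so the tuple is distributed exactly as $H_{j-1}$; when $c$ is uniform and independent, $(g^b, g^c)$ is a uniform pair and the tuple is distributed as $H_j$. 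Thus $\A'$ inherits the advantage of $\A$.

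The one point that requires care — and the reason the naive ``apply DDH to each token independently'' does not immediately work — is that all $k$ tokens share the single secret $S$, so the reduction cannot know $S$ while still embedding the DDH challenge as the public key $g^a = g^S$. The observation that resolves this, and which I expect to be the crux of the argument, is that a genuine token $\tok(S,\alpha) = (g^\alpha, (g^S)^\alpha)$ can be computed knowing only the public key $g^S$ and not $S$ itself; this is precisely what lets $\A'$ simulate the untouched tokens $j+1,\ldots,k$ consistently with the embedded public key. Beyond this, the remaining checks are routine: that $b$ uniform over $[0,p-1]$ matches the distribution of the exponents $A_i$, and that independence of the fresh $\alpha_i$ makes the simulated tokens mutually independent. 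The finite-field edge cases (such as $A_i = 0$) need no special handling, since the hybrids are matched exactly rather than approximately.
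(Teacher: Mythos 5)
Your proof is correct, but it takes a genuinely different route from the paper's. You run a standard hybrid argument over the $k$ tokens, replacing one token at a time by a uniform pair and embedding the DDH challenge as the public key together with the $j$-th token; the observation you single out as the crux --- that the untouched genuine tokens can be simulated from the public key $g^a$ alone --- is exactly what makes each hybrid step a clean DDH instance, and your endpoints $H_0$, $H_k$ do match the two distributions in the statement. The paper instead gives a single direct reduction exploiting the random self-reducibility of DDH: it runs the distinguisher once on $(g_1,(g_2^{s_1},g_3^{s_1}),\ldots,(g_2^{s_k},g_3^{s_k}))$ for fresh $s_i$, turning one challenge into all $k$ tokens simultaneously. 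The trade-off: the paper's reduction is tight (no factor-$k$ loss in advantage), while yours loses a factor of $k$; conversely, your argument requires no analysis of the joint distribution of the rerandomised pairs, whereas the paper's does --- and in fact, as written, the paper's purely multiplicative rerandomisation leaves the pairs $(g_2^{s_i},g_3^{s_i})$ in the `random' case all sharing the same implicit key $z/y$ (where $g_2=g^y$, $g_3=g^z$), so they are not jointly uniform over $(\GG\times\GG)^k$; the full self-reduction $(g_2^{s_i}g^{r_i},\,g_3^{s_i}g_1^{r_i})$ would be needed to land exactly on $\unif(\GG\times(\GG\times\GG)^k)$. Your hybrid sidesteps that subtlety entirely, at the cost of the factor $k$ in the advantage.
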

\begin{proof}
Let $\A$ be a PPT algorithm distinguishing the two distributions, and let $(g_1,g_2,g_3)$ be a DDH challenge (so either $(g_1,g_2,g_3) = (g^x,g^y,g^{xy})$ or $(g^x,g^y,g^z)$ for random $x,y,z$).  Run $\A$ on $(g_1,(g_2^{s_1},g_3^{s_1}),(g_2^{s_2},g_3^{s_2}),\ldots,(g_2^{s_k},g_3^{s_k}))$ for random $s_1,s_2,\ldots,s_k\in [0,p-1]$.
\end{proof}

\begin{theorem}\label{thm:honest}
Let $(X,Y)\sim \unif([0,p-1]^2)$ and $Z\sim \unif(\GG)$.  Then for all $s\in [0,p-1], \alpha\in [1,p-1],$ we have
\[\shuff\left(\tok(s,\alpha),X,Y,g^s\right) \equiv (Z,Z^s).\]
\end{theorem}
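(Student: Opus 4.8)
The plan is to expand both $\shuff$ and $\tok$ explicitly and observe that the output collapses to a pair governed by a single exponent, which I can then show is uniformly distributed.

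First I would substitute $\tok(s,\alpha) = (g^\alpha, g^{s\alpha})$ into the definition of $\shuff$ with rerandomisation parameters $\beta = X$, $\gamma = Y$ and public key $g^s$. This is precisely the correctness calculation already carried out in the protocol description, and it yields
\[
\shuff\left(\tok(s,\alpha), X, Y, g^s\right) = \left(g^{\alpha X + Y}, g^{s(\alpha X + Y)}\right),
\]
with the arithmetic in the exponent taken modulo $p$. The essential observation is that \emph{both} coordinates are determined by the single random quantity $W \defeq \alpha X + Y \bmod p$, so the output is exactly the pair $(g^W, g^{sW})$.

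Next I would show that $W$ is uniform on $[0,p-1]$. Since $Y\sim\unif([0,p-1])$ independently of $X$, and since for any fixed value $X = x$ the translation $y \mapsto \alpha x + y \bmod p$ is a bijection of $\mathbb{Z}_p$, the conditional law of $W$ given $X = x$ is uniform for every $x$; hence $W$ is unconditionally uniform. Note that this step does not even require $\alpha \neq 0$, as uniformity of $Y$ alone suffices; the hypothesis $\alpha \in [1,p-1]$ is simply inherited from the protocol.

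Finally, writing $Z = g^W$ with $W$ uniform gives $Z \sim \unif(\GG)$ together with $Z^s = g^{sW}$, so that $(g^W, g^{sW}) \equiv (Z, Z^s)$, which completes the argument. The only point requiring any care is that the assertion is an \emph{exact} equality of distributions rather than a merely computational one, so I must be sure that the collapsed exponent $W$ is genuinely uniform; once the collapse to a single exponent is noticed the conclusion is immediate, and I do not anticipate any substantive obstacle.
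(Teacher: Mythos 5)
Your proposal is correct and follows essentially the same route as the paper: expand the composition to $\left(g^{\alpha X+Y},g^{s(\alpha X+Y)}\right)$ and note that uniformity of $Y$ alone forces the collapsed exponent (and hence $g^{\alpha X+Y}$) to be uniform. Your observation that $\alpha\neq 0$ is not needed here matches the paper, which likewise invokes only the uniformity of $\gamma$.
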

\begin{proof}
We have $\shuff(\tok(s,\alpha),\beta,\gamma,g^s) = (g^{\alpha\beta+\gamma},(g^{\alpha\beta + \gamma})^s)$.  Since $\gamma$ is uniformly distributed, so is $\alpha\beta+\gamma$ and hence so is $g^{\alpha\beta+\gamma}$, as required.
\end{proof}

\begin{theorem}\label{thm:dishonest}
Let $(X,Y)\sim \unif([0,p-1]^2)$.  Then for all $s\in [0,p-1], t\in (\GG\setminus \{e\})\times \GG,$ either $t = \tok(s,\alpha)$ for some $\alpha\in [1,p-1]$ or we have
\[\shuff\left(t,X,Y,g^s\right) \equiv \unif(\GG\times\GG).\]
\end{theorem}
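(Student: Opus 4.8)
The plan is to pass to exponents and recognise $\shuff$ as a $\mathbb{Z}_p$-linear image of the uniform randomness $(\beta,\gamma)$, so that uniformity of the output reduces to the invertibility of a single $2\times 2$ matrix over the field $\mathbb{Z}_p$. Since $\GG$ is cyclic of prime order $p$ and $x\neq e$, I would first write $x=g^a$ and $y=g^b$ with $a\in[1,p-1]$ (the constraint $a\neq 0$ coming precisely from $x\neq e$) and $b\in[0,p-1]$. Expanding the definition gives
\[
\shuff\left((x,y),\beta,\gamma,g^s\right)=\left(x^\beta g^\gamma,\,y^\beta g^{s\gamma}\right)=\left(g^{a\beta+\gamma},\,g^{b\beta+s\gamma}\right),
\]
so in the exponent the output equals $M\,(\beta,\gamma)^{\top}$, where
\[
M=\begin{pmatrix} a & 1 \\ b & s \end{pmatrix},
\]
and $(\beta,\gamma)=(X,Y)$ is uniform over $\mathbb{Z}_p^2$.

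The key step is then the observation that, since $p$ is prime, $\mathbb{Z}_p$ is a field, so if $\det M=as-b\neq 0$ the matrix $M$ is invertible and hence a bijection of $\mathbb{Z}_p^2$. A bijection carries the uniform distribution to itself, so $M\,(\beta,\gamma)^{\top}$ is uniform on $\mathbb{Z}_p^2$; transporting along the group isomorphism $\mathbb{Z}_p^2\cong\GG\times\GG$ induced by $g$ then yields $\shuff\left(t,X,Y,g^s\right)\equiv\unif(\GG\times\GG)$, as required.

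It remains to identify the degenerate case with the excluded one. We have $\det M=0$ exactly when $b=as$, i.e.\ $y=g^{as}=x^s$; combined with $x=g^a$ and $a\in[1,p-1]$, this says precisely that $t=\tok(s,a)$ with $\alpha=a$. Hence whenever $t$ is not an honestly generated token of $s$ we are in the invertible case and the uniform conclusion holds.

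I expect the only genuinely delicate point to be the role of the hypothesis $x\neq e$. It is exactly what guarantees $a\neq 0$, so that the degenerate case $\det M=0$ corresponds to an \emph{admissible} token exponent $\alpha=a\in[1,p-1]$ rather than the forbidden value $\alpha=0$. Were $x=e$ permitted, the input $(e,e)$ would give $\det M=0$ while failing to be an honest token, and $\shuff$ would collapse to a constant (hence non-uniform) distribution, breaking the statement; everything else is routine linear algebra over a field.
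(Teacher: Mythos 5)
Your proof is correct and is essentially the paper's argument: both pass to exponents and observe that $(\beta,\gamma)\mapsto(a\beta+\gamma,\;b\beta+s\gamma)$ is an invertible linear map of $\mathbb{Z}_p^2$ exactly when $t$ is not an honestly generated token for $s$ (with $x\neq e$ guaranteeing $a\neq 0$, so the degenerate case really is the excluded one). The only difference is presentational: you certify invertibility via the determinant $as-b$, whereas the paper writes $y=g^{s'\alpha}$ and tracks independence of the two coordinates through a chain of elementary substitutions; the substance and the role of each hypothesis are identical.
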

\begin{proof}
Without loss of generality $t=(g^\alpha,g^{s'\alpha})$ for some $s'$ and $\alpha\neq 0$.  If $t\neq \tok(s,\alpha)$ then $s'\neq s$.  Then 
\begin{align*}
\shuff(t,\beta,\gamma,g^s) &= \left((g^\alpha)^\beta g^\gamma,(g^{s'\alpha})^\beta (g^s)^\gamma\right)\\
&=\left(g^{\alpha\beta + \gamma},g^{s'\alpha\beta + s\gamma}\right)\\
&=\left(g^{\alpha\beta + \gamma},g^{s'(\alpha\beta + \gamma) + (s-s')\gamma}\right).
\end{align*}
Since $\alpha \neq 0$ and $\beta,\gamma$ are independently uniformly distributed (\iud), we have that $\alpha\beta$ and $\gamma$ are \iud, and hence so are $\alpha\beta + \gamma$ and $\gamma$.  Hence since $s-s'\neq 0$ we have that $\alpha\beta + \gamma$ and $s'(\alpha\beta+\gamma)+(s-s')\gamma$ are IUD and hence so are $g^{\alpha\beta + \gamma}$ and $g^{s'(\alpha\beta + \gamma) + (s-s')\gamma}$, as required.
\end{proof}

\section{Discussion}

\subsection*{Related work}

The other approach for achieving ambiguity of which the author is aware is to use a Private Set Intersection Cardinality (PSI-CA) protocol to allow users to determine whether the set of tokens they have collected intersects with the set of tokens held by the server from infected users, without learning which tokens are in the intersection.  This was proposed independently in \cite{trieu2020epione} and in \cite{contrail2020}.  The security analysis in \cite{trieu2020epione} is expressly limited to the semi-honest setting, although it is suggested that one could guard against a dishonest user by requiring them to provide zero-knowledge proofs of correct behaviour, no doubt with significant performance consequences.

The protocol in \cite{contrail2020} is similarly clearly flawed in the presence of a fully malicious user (specifically, at step 2 of the protocol, Alice may use different values of $\alpha$ for different $x_i$ and thereby reidentify elements despite Bob's permutation).  Moreover, no proofs are provided for the claimed security properties, and it seems that even in the semi-honest setting the claim that the server obtains no information about the contacts of undiagnosed users may be incorrect (for example, if the authorities can send to a suspect two tokens $x$ and $x'$ such that $x'=x^2$ then they will be able to identify the suspect as Alice when she performs the protocol).

\subsection*{Open questions}

The trick for this protocol was to ensure correct behaviour not by cumbersome zero-knowledge proofs but by rerandomising in such a way that a malformed token just results in the malefactor seeing random noise.  The most important question for future work is whether a similar trick can be applied to obtain a lightweight DH-based protocol for PSI-CA which is robust against a fully malicious adversary.  This would be extremely desirable because it could easily be added to DP3T-style systems with no changes to the system structure or to the technically-constrained Bluetooth Low Energy tokens.

A second question is whether it is possible for the server, rather than sending all the rerandomised tokens to each user, to instead combine them in some way such that the user can tell whether they included at least one of the special form $(x,x^{s_u})$.  This would be desirable for both performance and privacy reasons, since it would prevent users from learning how many of the tokens sent in by infected individuals were theirs.  If the question was whether they were \emph{all} of the special form then this would be trivial: just multiply together all the tokens componentwise.  Unfortunately we have been unable to find a similar solution for the `disjunctive' task.

\printbibliography
\end{document}